\setlist{nolistsep}
\newcommand{\cmark}{\ding{51}}%
\newcommand{\xmark}{\ding{55}}%
\newcounter{ALC@tempcntr}% Temporary counter for storage
\newcommand{\W}{\mathbf{W}}
\newcommand{\matC}{\mathbf{C}}
\newcommand{\matU}{\mathbf{U}}
\newcommand{\Y}{\mathbf{Y}}
\newcommand{\Z}{\mathbf{Z}}
\newcommand{\matV}{\mathbf{V}}
\newcommand{\matA}{\mathbf{A}}
\newcommand{\matW}{\mathbf{W}}
\newcommand{\matM}{\mathbf{M}}
\newcommand{\matY}{\mathbf{Y}}
\newcommand{\perm}{\mathbf{P}}
\newcommand{\graphG}{G}
\newcommand{\V}{\mathcal{V}}
\newcommand{\E}{\mathcal{E}}
\newcommand{\R}{\mathcal{R}_u^k}
\newcommand{\matD}{\mathbf{D}}
\newcommand{\A}{\mathbf{A}}
\newcommand{\matS}{\mathbf{S}}
\newcommand{\matSigma}{\mathbf{\Sigma}}
\newcommand{\y}{\mathbf{y}}
\newcommand{\vecdk}{\mathbf{d}^k_u}
\newcommand{\vecd}{\mathbf{d}}
\newcommand{\veca}{\mathbf{f}}
\newcommand{\vecf}{\mathbf{f}}
\newcommand{\context}{\mathbf{c}}
\newcommand{\method}{REGAL\xspace}
\newcommand{\embedding}{xNetMF\xspace}
\newcommand{\nystrom}{Nystr{\"o}m }
\newcommand{\Ell}{\mathcal{L}}
\newtheorem{problem}{Problem}
\author{Mark Heimann}
\affiliation{\institution{University of Michigan, Ann Arbor}}
\email{mheimann@umich.edu}
\author{Haoming Shen}
\affiliation{\institution{University of Michigan, Ann Arbor}}
\email{hmshen@umich.edu}
\author{Tara Safavi}
\affiliation{\institution{University of Michigan, Ann Arbor}}
\email{tsafavi@umich.edu}
\author{Danai Koutra}
\affiliation{\institution{University of Michigan, Ann Arbor}}
\email{dkoutra@umich.edu}
\begin{document}

%\copyrightyear{2018} 
%\acmYear{2018} 
%\setcopyright{acmcopyright}
\acmConference[CIKM '18]{The 27th ACM International Conference on Information and Knowledge Management}{October 22--26, 2018}{Torino, Italy}
\acmBooktitle{The 27th ACM International Conference on Information and Knowledge Management (CIKM '18), October 22--26, 2018, Torino, Italy}
\acmPrice{15.00}
\acmDOI{10.1145/3269206.3271788}
\acmISBN{978-1-4503-6014-2/18/10}
  
\begin{abstract}
Problems involving multiple networks are prevalent in many scientific and other domains.
In particular, network alignment, or the task of identifying corresponding nodes in different networks, has applications across the social and natural sciences.
Motivated by recent advancements in node representation learning for \emph{single}-graph tasks, 
we propose \method (REpresentation learning-based Graph ALignment), a framework that leverages the power of automatically-learned node representations to match nodes across \emph{different} graphs.
Within \method we devise \embedding, an elegant and principled node embedding formulation that uniquely generalizes to multi-network problems.
Our results demonstrate the utility and promise of unsupervised representation learning-based network alignment in terms of both speed and accuracy.
\method runs up to $30\times$ faster in the representation learning stage than comparable methods, outperforms existing network alignment methods by 20 to 30\% accuracy on average, and scales to networks with millions of nodes each.

\end{abstract}

%generated from %https://dl.acm.org/ccs/ccs.cfm
\begin{CCSXML}
<ccs2012>
<concept>
<concept_id>10002951.10003227.10003351</concept_id>
<concept_desc>Information systems~Data mining</concept_desc>
<concept_significance>500</concept_significance>
</concept>
<concept>
<concept_id>10010147.10010257.10010293.10010319</concept_id>
<concept_desc>Computing methodologies~Learning latent representations</concept_desc>
<concept_significance>500</concept_significance>
</concept>
</ccs2012>
\end{CCSXML}

\ccsdesc[500]{Information systems~Data mining}
\ccsdesc[500]{Computing methodologies~Learning latent representations}

\keywords{graph mining, network alignment, graph matching, node representation learning, node embedding}

%\settopmatter{printfolios=true} % comment out to drop the page numbers
\title{REGAL: Representation Learning-based Graph Alignment}
\maketitle

\section{Introduction}
Networks are powerful structures that naturally capture the wealth of relationships in our interconnected world, such as co-authorships, email exchanges, and friendships \cite{koutra2017individual}.
The data mining community has accordingly proposed various methods for numerous tasks over a \emph{single} network, like anomaly detection, link prediction, and user modeling.
However, many graph mining tasks involve joint analysis of nodes across \emph{multiple} networks.
Some problems, like network alignment~\cite{netalign,final,bigalign} and graph similarity~\cite{deltacon}, are inherently defined in terms of multiple graphs.  
In other cases, it is desirable to perform analysis across a collection of graphs, such as the MRI-based brain graphs of patients~\cite{brain-networks}, or snapshots of a temporal graph~\cite{timecrunch}. 

\begin{figure}[t!]
    \centering
    \includegraphics[width=0.49\textwidth]{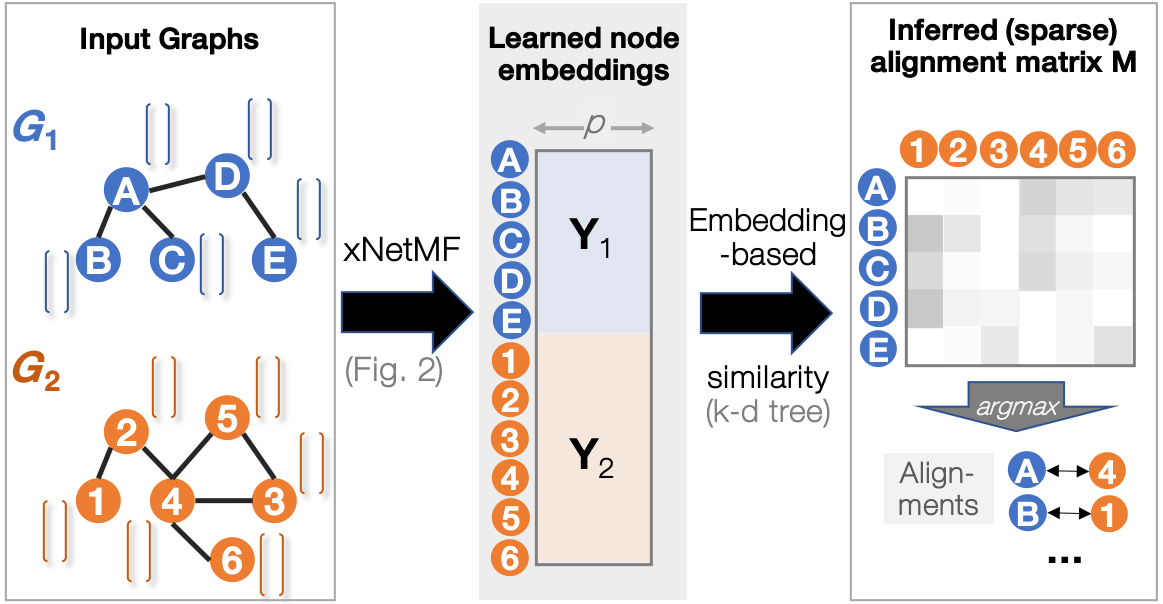}
        \vspace{-0.5cm}
    \caption{Pipeline of proposed graph alignment method, \method, based on our \embedding representation learning method.}
    \label{pipeline}
    \vspace{-0.4cm}
\end{figure}

In this work, we study  \textbf{network alignment} or matching, which is the problem of finding corresponding nodes in different networks.
Network alignment is crucial for identifying similar users in different social networks, analyzing chemical compounds, studying protein-protein interaction, and various computer vision tasks, among others \cite{netalign}.  
Many existing methods try to relax the computationally hard optimization problem, as \textit{designing features} that can be directly compared for nodes in different networks is not an easy task. 
However, recent advances~\cite{node2vec,deepwalk,line,SDNE} 
have automated the process of learning node feature representations and have led to state-of-the-art performance in downstream prediction, classification,  and clustering tasks.
Motivated by these successes, we propose network alignment via matching latent, learned node representations.
Formally, the problem can be stated as: 

\begin{problem}
Given two graphs $G_1$ and $G_2$ with node-sets $\mathcal{V}_1$ and $\mathcal{V}_2$ and possibly node attributes $\mathcal{A}_1$ and $\mathcal{A}_2$ resp., devise an efficient \textbf{network alignment method} that aligns nodes by learning \textbf{directly comparable} node representations $\matY_1$ and $\matY_2$, from which a node mapping $\phi: \mathcal{V}_1 \rightarrow \mathcal{V}_2$ between the networks can be inferred.
\label{prob}
\end{problem}

To this end, we introduce \textbf{\method}, or REpresentation-based Graph ALignment, a framework that efficiently identifies node matchings by greedily aligning their latent feature representations (Fig.~\ref{pipeline}).
\method is both highly intuitive and extremely powerful given suitable node feature representations.  
For use within this framework, we propose Cross-Network Matrix Factorization (\textbf{\embedding}), which we introduce specifically to satisfy the requirements of the task at hand.
\embedding differs from most existing representation learning approaches that (i)~rely on proximity of nodes in a \textit{single} graph, yielding embeddings that are not comparable across disjoint networks~\cite{mlg_paper}, and (ii)~often involve some procedural randomness (e.g., random walks), which introduces variance in the embedding learning, even in one network.
By contrast, \embedding preserves \textit{structural} similarities rather than proximity-based similarities, allowing for generalization beyond a single network.

To learn node representations through an efficient, low-variance process, we formulate \embedding as matrix factorization over a similarity matrix that incorporates structural similarity 
and attribute agreement (if the latter is available) between nodes in disjoint graphs. 
To avoid explicitly constructing a full similarity matrix, which requires computing all pairs of similarities between nodes in the multiple input networks, we extend the \nystrom low-rank approximation commonly used for large-scale kernel machines ~\cite{nystrom}.
\embedding is thus a principled and efficient \emph{implicit} matrix factorization-based approach, requiring a fraction of the time and space of the na\"{i}ve approach while avoiding ad-hoc sparsification heuristics.

Our contributions may be stated as follows: \setlist[itemize]{leftmargin=*}
\begin{itemize}
    \item \textbf{Problem Formulation}. We formulate the important unsupervised graph alignment problem as a problem of learning and matching node representations that \emph{generalize to multiple graphs}. 
    To the best of our knowledge, we are the first to do so.
    \item  \textbf{Principled Algorithms}. We introduce a flexible alignment framework, \method (Fig.~\ref{pipeline}), which  learns node alignments by jointly embedding multiple graphs and comparing the most similar embeddings across graphs \emph{without} performing all pairwise comparisons. 
    Within \method we devise \embedding, an elegant and principled representation learning formulation.
    \embedding learns embeddings from structural and, if available, attribute identity, which are characteristics most conducive to multi-network analysis.
    \item \textbf{Extensive Experiments}. 
    Our results demonstrate the utility of representation learning-based network alignment in terms of both speed and accuracy. 
    Experiments on real graphs show that \embedding runs up to $30\times$ faster than several existing network embedding techniques, and \method outperforms traditional network alignment methods by 20-30\% in accuracy.
\end{itemize}

For reproducibility, the source code of \method and \embedding is publicly available at \url{https://github.com/GemsLab/REGAL}.

% \vspace{-0.2cm}
\section{Related Work}
\label{sec:related}
Our work focuses on the problem of network alignment, and is related to node representation learning and matrix approximation. 

\vspace{.15cm}
\noindent\textbf{Network Alignment.} 
Instances of the network alignment or matching problem appear in various settings: from data mining to security and re-identification~\cite{final,bigalign,netalign}, chemistry, bioinformatics~\cite{multimagna,isorank,klau}, databases, translation~\cite{netalign}, vision, and pattern recognition~\cite{ZaslavskiyBV09}. Network alignment is usually formulated as the optimization problem
$\min_\mathbf{P}||\mathbf{P}\A_1\mathbf{P}^T - \A_2||_F^2$~\cite{bigalign}, where $\mathbf{A}_1$ and $\mathbf{A}_2$ are the adjacency matrices of the two networks to be aligned, and $\mathbf{P}$ is a permutation matrix or a relaxed version thereof, such as doubly stochastic matrix~\cite{VogelsteinCPKFVP11} or some other concave/convex relaxation~\cite{ZaslavskiyBV09}. Popular proposed solutions to the network alignment problem span genetic algorithms, spectral methods, clustering algorithms, decision trees, expectation maximization,  probabilistic approaches, and distributed belief propagation \cite{multimagna,isorank,klau,netalign}. These methods usually require carefully tailoring for special formats or properties of the input graphs.  For instance, specialized formulations may be used when the graphs are bipartite \cite{bigalign} or contain node/edge attributes \cite{final}, or when some ``seed'' alignments are known a priori \cite{seedalign}. Prior work using node embeddings designed for social networks to align users \cite{liu2016aligning} has required such seed alignments.  In contrast, our approach can be applied to attributed and unattributed graphs with virtually no change in formulation, and is \emph{unsupervised}: it does not require prior alignment information to find high-quality matchings. Recent work \cite{heimann2018hashalign} has used hand-engineered features, while our proposed approach leverages the power of latent feature representations.  

\vspace{.15cm}
\noindent\textbf{Node Representation Learning.} Representation learning methods try to find similar embeddings for similar nodes~\cite{gemsurvey}.  
They may be based on shallow~\cite{node2vec} or deep architectures~\cite{SDNE}, and may discern neighborhood structure through random walks \cite{deepwalk} or first- and second-order connections \cite{line}. 
Recent work inductively learns representations \cite{graphsage} and/or incorporates textual or other node attributes \cite{lane, tadw}.
However, all these methods use node \emph{proximity} or neighborhood overlap to drive embedding, which has been shown to lead to inconsistency \emph{across} networks \cite{mlg_paper}.

Unlike these methods, the recent work struc2vec \cite{struc2vec} 
preserves \emph{structural} similarity of nodes, regardless of their proximity in the network. Prior to this work, existing methods for structural role discovery mainly focused on hand-engineered features \cite{rolediscovery}.  
However, for structurally similar nodes, struc2vec embeddings were found to be visually more comparable \cite{struc2vec} than those learned by state-of-the-art proximity-based node embedding techniques as well as existing methods for role discovery \cite{rolx}.  
While this work is most closely related to our proposed node embedding method, we summarize some crucial differences in Table \ref{tab:structure_comp}.  Additionally, we note that struc2vec, like work on structural node embeddings concurrent to ours~\cite{graphwave}, cannot natively use node attributes.

\begin{table}[t!]
     \caption{Qualitative comparison of structure-based embeddings.}
     \label{tab:structure_comp}
     \vspace{-0.3cm}
{\footnotesize
\begin{tabular}{ p{3.75cm} p{3.75cm}}
     \toprule
     \textbf{struc2vec~\cite{struc2vec}} &  \textbf{\embedding (Proposed)}\\
     \midrule
     Variable-length degree sequences compared with dynamic time warping & Fixed length vectors capturing neighborhood degree distributions \\ \midrule
     Variance-inducing, time-consuming random walk-based sampling & Efficient matrix factorization \\ \midrule
     Heuristic-based omission of similarity computations & Low-rank implicit approximation of full similarity matrix \\ \hline

     >0.5 hours to embed Arxiv network (Table \ref{tab:datasets}) \cite{ppi} using optimizations & <90 sec to embed Arxiv network; $\sim22\times$ speedup \\
     \bottomrule
    \end{tabular}
    }
    \vspace{-0.2cm}
    \end{table}
    
\begin{table}[t!]
 \vspace{-0.1cm}
\caption{Qualitative comparison of related work to the embedding module of \method. (*: Method not based on random walks, RW)
} 
 \vspace{-0.3cm}
\label{tab:qualitative}
\centering
\resizebox{\columnwidth}{!}{
\begin{tabular}{lcccc|c}
\toprule
  \textbf{ } & \rotatebox[origin=c]{0}{\textbf{\bf Structure}} & \rotatebox[origin=c]{0}{\textbf{\bf Attributes}} & \rotatebox[origin=c]{0}{\textbf{\bf RW-free* }} &  \rotatebox[origin=c]{0}{\textbf{\bf Scalable}} & 
  \rotatebox[origin=c]{0}{\textbf{\bf Cross-net}} \\ \hline 
  LINE~\cite{line} &  \xmark & \xmark & \cmark & \cmark & \xmark \\
  TADW~\cite{tadw} &  \xmark & \cmark & \cmark & \textbf{?} & \xmark \\
  node2vec~\cite{node2vec} & \textbf{?} & \xmark & \xmark & \textbf{?} & \xmark \\ 
  struc2vec ~\cite{struc2vec} & \cmark & \xmark & \xmark & \xmark & ? \\ \hline
  \embedding (in \method)  & \cmark & \cmark & \cmark & \cmark & \cmark \\
\bottomrule
\end{tabular}
}
 \vspace{-0.4cm}
\end{table}

Many well-known node embedding methods based on shallow architectures such as the popular skip-gram with negative sampling (SGNS) have been cast in matrix factorization frameworks \cite{tadw, netmf}.  
However, ours is the first to cast node embedding using SGNS to capture \emph{structural} identity in such a framework.  
In Table \ref{tab:qualitative} we extend our qualitative comparison to some other well-known methods that use similar architectures.  
Their limitations inspire many of our choices in the design of \method and \embedding.

In terms of applications, very few works consider using learned representations for problems that are inherently defined in terms of multiple networks, where embeddings must be compared. \cite{embed-graphsim} computes a similarity measure between graphs based on the Earth Mover's Distance \cite{emd} between simple node embeddings generated from the eigendecomposition of the adjacency matrix. Here, we consider the significantly harder problem of learning embeddings that may be individually matched to infer node-level alignments.
 
\vspace{.15cm}
\noindent\textbf{Low-Rank Matrix Approximation.}  The \nystrom method has been used for low-rank approximations of large, dense similarity matrices \cite{nystrom}. 
While the quality of its approximation has been extensively studied theoretically and empirically in a statistical learning context for kernel machines \cite{nystrom-guarantees}, to the best of our knowledge it has not been considered in the context of node embedding.

\begin{figure*}[t!]
	\centering
	\vspace{-0.2cm}
	\includegraphics[width=0.92\textwidth]{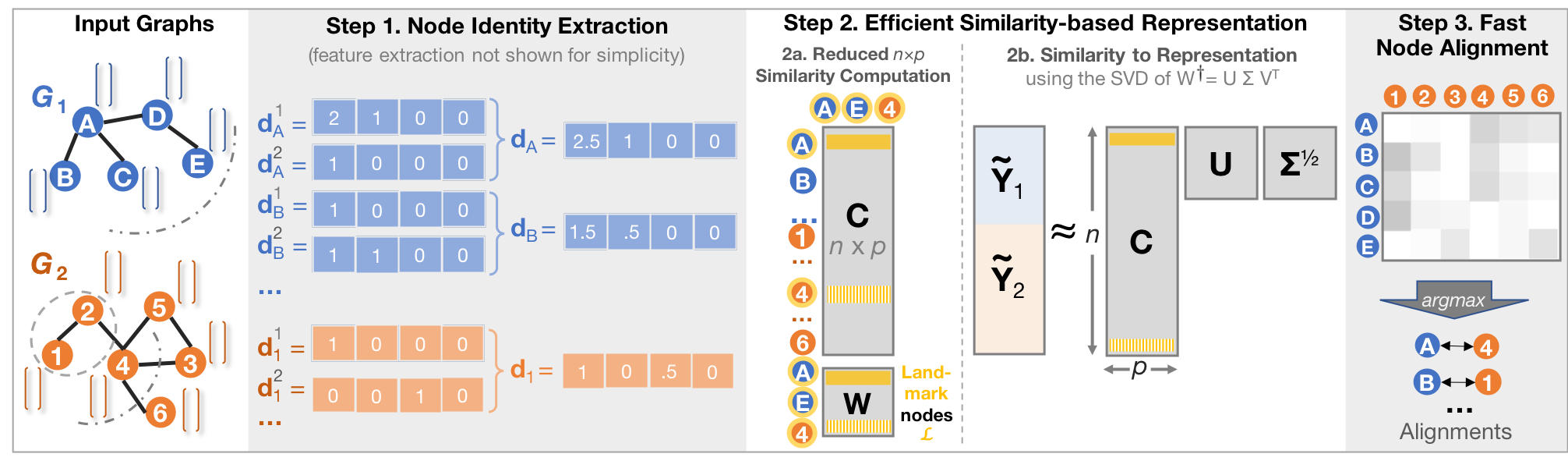}
	\vspace{-0.4cm}
	\caption{Proposed \method approach, consisting of 3 main steps. In the example, for the structural identity, up to $K=2$ hop away neighborhoods are taken into account (the 1-hop and 2-hop neighborhoods for nodes $A$ and $1$ are shown with dashed and dash-dotted lines, respectively). The discount factor is set to $\delta=0.5$.
	For simplicity, no logarithmic binning is applied on $\vecdk$. 
} 

	\label{embed-pipeline}
	\vspace{-0.3cm}
\end{figure*}

\section{\method: REpresentation Learning-based Graph ALignment}

In this section we introduce our representation learning-based network alignment framework, \method, for Problem~\ref{prob}.
For simplicity we focus on aligning two graphs (e.g., social or protein networks), though our method can easily be extended to more networks. 
Let $G_1(\V_1, \E_1)$ and $G_2(\V_2, \E_2)$ be two unweighted and undirected graphs with node sets $\mathcal{V}_1$ and $\mathcal{V}_2$; edge sets $\mathcal{E}_1$ and $\mathcal{E}_2$; and  
possibly node attributes $\mathcal{A}_1$ and $\mathcal{A}_2$, respectively.  Note that these graphs do \emph{not} have to be the same size, unlike many other network alignment formulations that have this (often unrealistic) restriction. 
Let $n$ be the number of nodes across graphs, i.e., $n=|\mathcal{V}_1| + |\mathcal{V}_2|$.
We define the main symbols   in Table~\ref{tab:dfn}.

The steps of \method may be summarized as:
\setlist[enumerate]{leftmargin=*}
\begin{enumerate}
	\item \textbf{Node Identity Extraction}: The first step extracts structure- and attribute-related information for all $n$ nodes.
	\item \textbf{Efficient Similarity-based Representation}: The second step obtains the node embeddings, conceptually by factorizing a similarity matrix of the node identities from the previous step.
	To avoid the expensive computation of pairwise node similarities and explicit factorization, we extend the \nystrom method for low-rank matrix approximation to perform an \emph{implicit} similarity matrix factorization by \textbf{(a)} comparing the similarity of each node only to a sample of $p \ll n$ ``landmark'' nodes, and \textbf{(b)} using these node-to-landmark similarities to construct our representations from a decomposition of its low-rank approximation.
	\item \textbf{Fast Node Representation Alignment}: Finally, we align nodes between graphs by greedily matching the embeddings with an efficient data structure that allows for fast identification of the top-$\alpha$ most similar embeddings from the other graph(s). 
\end{enumerate}

In the rest of this section we discuss and justify each step of \method, the pseudocode of which is given in Algorithm~\ref{regal}.  Note that the first two steps, which output a set of node embeddings, comprise our \embedding method, which may be independently used, particularly for further cross-network analysis tasks.

\begin{table}[t!]
     \caption{Major symbols and definitions.}
     \label{tab:dfn}
     \vspace{-0.2cm}
{\footnotesize
\begin{tabular}{ p{1.64cm} p{6.1cm}}
     \toprule
     \textbf{Symbols} &  \textbf{Definitions}\\
     \midrule
     $\graphG_i(\V_i, \E_i, \mathcal{A}_i)$ & graph $i$ with nodeset $\V_i$, edgeset $\E_i$, and node attributes $\mathcal{A}_i$ \\ 
     $\matA_i$ & adjacency matrix of $\graphG_i$ \\ 
     $n_i$ & number of nodes in graph $\graphG_i$ \\
     $\V = \V_1 \cup \V_2$
    & combined set of vertices in  $\graphG_1$ and $\graphG_2$ \\
    $|\V| = n$  &  total number of nodes in graphs $\graphG_1$ and $\graphG_2$ \\
     $d_{\text{avg}}$ & average node degree \\ \hline

     ${\R}$ & set of $k$-hop neighbors of node $u$\\  
      $\vecdk$ & vector of node degrees in a single set $\R$   \\
      $K$ & maximum hop distance considered \\
      $\delta$ & discount factor in $(0,1]$ for distant neighbors  \\
      $\vecd_u$ & $=\sum_{k=1}^{K} \delta^{k-1} \vecdk$ combined neighbor degree vector for node $u$ \\ 
      $b$ & number of buckets for degree binning \\
      $\vecf_u$ & $F$-dimensional attribute vector for node $u$ \\
      \hline
     
      $\matS, \tilde{\matS}$ & combined structural and attribute-based similarity matrix, and its approximation \\
      $\matY, \tilde{\matY}$ & matrix with node embeddings as rows, and its approximation \\
     $p$ & number of landmark nodes in \method\\ 
      
     $\alpha$ & the number of alignments to find per node \\
     
     \bottomrule
    \end{tabular}
    }
    \vspace{-0.3cm}
    \end{table}
    \setlength{\textfloatsep}{1\baselineskip plus 0.2\baselineskip minus 0.5\baselineskip}

\subsection{Step 1: Node Identity Extraction}
\label{sec:step1}
The goal of \method's representation learning module, \embedding, is to define node ``identity'' in a way that generalizes to multi-network problems.
This step is critical because many existing works define identity based on node-to-node proximity, but in multi-network problems nodes have no direct connections to each other and thus cannot be sampled in each other's contexts by random walks on separate graphs. 
To overcome this problem, we focus instead on more broadly comparable, generalizable quantities: \emph{structural} identity, which relates to structural roles~\cite{rolx}, and \emph{attribute}-based identity.

\vspace{0.1cm}
\noindent \textbf{Structural Identity}. In network alignment, the well-established assumption is that aligned nodes have similar \textit{structural} connectivity or degrees \cite{bigalign,final}.
Adhering to this assumption, we propose to learn about a node's structural identity from the degrees of its neighbors.  
To gain higher-order information, we also consider neighbors up to $k$ hops from the original node.  

For a node $u \in \V$, we denote $\R$ as the set of nodes that are exactly $k \ge 0$ steps away from $u$ in its graph $\graphG_i$.   
We want to capture degree information about the nodes in $\R$.  A basic approach would be to store the degrees in a $D$-dimensional vector $\vecdk$, where $D$ is the maximum degree in the original graph $\graphG$, with the $i$-th entry of $\vecdk$, or $d_u^k(i)$, the number of nodes in $\R$ with degree $i$.  For simplicity, an example of this approach is shown for the vectors $\vecd_A, \vecd_B$, etc.\ in Fig.\ \ref{embed-pipeline}. 
However, real graphs have skewed degree distributions.
To prevent one high-degree node from inflating the length of these vectors, we bin nodes together into $b = \lceil \log_2D \rceil$ logarithmically scaled buckets such that the $i$-th entry of $\vecdk$ contains the number of nodes $u \in \R$ such that $\lfloor \log_2(deg(u)) \rfloor = i$.  This has two benefits: (1) it shortens the vectors $\vecdk$ to a manageable $\lceil \log_2D \rceil$ dimensions, and (2) it makes their entries more robust to small changes in degree introduced by noise, especially for high degrees when more different degree values are combined into one bucket.

\vspace{0.1cm}
\noindent  \textbf{Attribute-Based Identity}.
Node attributes, or features, have been shown to be useful for cross-network tasks~\cite{final}.  
Given $F$ node attributes, we can create for each node $u$ an $F$-dimensional vector $\vecf_u$ representing its values (or lack thereof).
For example,  $f_u(i)$ corresponds to the $i^{th}$ attribute value for node $u$.  
Since we focus on node representations, we mainly consider node attributes, although we note that statistics such as the mean or standard deviation of edge attributes on incident edges to a node can easily be turned into node attributes.
Note that while \method is flexible to incorporate attributes, if available, it can also rely solely on structural information when such side information is not available. 

\vspace{0.1cm}
\noindent \textbf{Cross-Network Node Similarity}. We now incorporate the above aspects of node identity into a combined similarity function that can be used to compare nodes within \emph{or across} graphs, relying  on the comparable notions of structural and attribute identity, rather than direct proximity of any kind: 
\setlength\abovedisplayskip{5pt}
\begin{equation}
\text{sim}(u,v) = \exp{[-\gamma_s \cdot ||\vecd_u - \vecd_v||_2^2 - \gamma_a \cdot \text{dist}(\vecf_u,\vecf_v) ], }  
\label{eq:sim}
\setlength\belowdisplayskip{5pt}
\end{equation}
\noindent where $\gamma_s$ and $\gamma_a$ are scalar parameters controlling the effect of the structural and attribute-based identity respectively; $\text{dist}(\vecf_u,\vecf_v)$ is the attribute-based distance of nodes $u$ and $v$, discussed below (this term is ignored if there are no attributes); $\vecd_u = \sum_{k=1}^{K} \delta^{k-1} \vecdk$ is the neighbor degree vector for node $u$ aggregated over $K$ different hops;  
$\delta \in (0,1]$ is a discount factor for greater hop distances; and $K$ is a maximum hop distance to consider (up to the graph diameter).  Thus, we compare structural identity at several levels by combining the neighborhood degree distributions at several hop distances, attenuating the influence of distant neighborhoods with a weighting schema that is often encountered in diffusion processes~\cite{deltacon}. 

The distance between attribute vectors depends on the type of node attributes (e.g., categorical, real-valued).
A variety of functions can be employed accordingly. 
For categorical attributes, which have been studied in attributed network alignment~\cite{final}, we propose using the number of disagreeing features as a attribute-based distance measure of nodes $u$ and $v$:
$\text{dist}(\veca_u, \veca_v) = \textstyle \sum_{i=1}^{F} \mathds{1}_{f_u(i) \neq f_v(i)},$
where $\mathds{1}$ is the indicator function.  Real-valued attributes can be compared by Euclidean or cosine distance, for example. 

\subsection{Step 2: Efficient Similarity-based Representation}

As we have mentioned, many representation learning methods are stochastic~\cite{line,deepwalk,SDNE,node2vec,struc2vec}.
A subset of these rely on random walks on the original graph~\cite{deepwalk,node2vec} or a generated multi-layer similarity graph~\cite{struc2vec}) to sample context for the SGNS embedding model.  
For cross-network analysis, we avoid random walks for two reasons:  
(1) The variance they introduce in the representation learning often makes embeddings across different networks non-comparable~\cite{mlg_paper};  
and
(2) they can add to the computational expense. For example, node2vec's total runtime is dominated by its sampling time \cite{node2vec}.

To overcome the aforementioned issues,  we propose a new \emph{implicit} matrix factorization-based approach that leverages a combined structural and attribute-based similarity matrix $\matS$, which is induced by our similarity function in Eq.~\eqref{eq:sim} and considers affinities at different neighborhoods.  
Intuitively, the goal is to find $n \times p$ matrices $\matY$ and $\Z$ such that:
$\matS \approx \matY \Z^\top$, where $\matY$ is the node embedding matrix and $\Z$ is not needed for our purposes. We first discuss the limitations of traditional approaches, then propose an efficient way of obtaining the embeddings \textit{without} ever explicitly computing $\matS$.

\vspace{0.1cm}
\noindent \textbf{Limitations of Existing Approaches}.
A natural but na{\"i}ve approach is to compute combined structural and attribute-based similarities between \emph{all} pairs of nodes within and across \emph{both} graphs to form the matrix $\matS$, such that $\matS_{ij} = sim(i,j) \; \forall i,j \in \V$.  
Then $\matS$ can be explicitly factorized, for example by minimizing a factorization loss function given $\matS$ as input, (e.g., the Frobenius norm $||\matS - \matY \Z^\top||_F^2$ \cite{nmf}).
However, both the computation and storage of $\matS$ have \emph{quadratic} complexity in $n$.
While this would allow us to embed graphs jointly, it lacks the needed scalability for multiple large networks.

Another alternative is to create a \emph{sparse} similarity matrix by calculating only the ``most important'' similarities, for each node choosing a small number of comparisons using heuristics like similarity of node degree \cite{struc2vec}.  
However, such ad-hoc heuristics may be fragile in the context of noise.  We will have no approximation at all for most of the similarities, and there is no guarantee that the most important ones are computed. 

\vspace{0.1cm}
\noindent \textbf{Step 2a: Reduced $n \times p$ Similarity Computation.}
Instead, we propose a principled way of \textit{approximating} the full similarity matrix $\matS$ with a low-rank matrix $\tilde{\matS}$, which is \textit{never} explicitly computed.  
To do so, we randomly select $p\ll n$ ``\emph{landmark}'' nodes chosen across both graphs $\graphG_1$ and $\graphG_2$ and compute their similarities to all $n$ nodes in these graphs using Eq.~\eqref{eq:sim}.
This yields an $n \times p$ similarity matrix $\matC$, from which we can extract a $p \times p$ ``landmark-to-landmark'' submatrix $\matW$.
As we explain below, these two matrices suffice to approximate the full similarity matrix and allow us to obtain node embeddings \emph{without} actually computing and factorizing $\tilde{\matS}$. 

To do so, we extend the \nystrom method, which has applications in randomized matrix methods for kernel machines~\cite{nystrom}, to node embedding. 
The low-rank matrix $\tilde{\matS}$ is given as: 

\begin{equation}
\tilde{\matS} = \matC\W^{\dagger}\matC^\top,
\label{eq:approx}
\end{equation}

where $\matC$ is an $n \times p$ matrix formed by sampling $p$ landmark nodes from $\V$  and computing the similarity of all $n$ nodes of $\graphG_1$ and $\graphG_2$ to the $p$ landmarks only, as shown in Fig.~\ref{embed-pipeline}.  Meanwhile, $\W^{\dagger}$ is the pseudoinverse of $\matW$, a $p \times p$ matrix consisting of the pairwise similarities among the landmark nodes (it corresponds to a subset of $p$ rows of $\matC$). We choose landmarks randomly; more elaborate (and slower) sampling techniques based on leverage scores \cite{nystrom-guarantees} or node centrality measures offer little, if any, performance improvement. 

Because $\tilde{\matS}$ contains an estimate for the similarity between any pair of nodes in either graph, it would still take $\Omega(n^2)$ time and space to compute and store.  
However, as we discuss below, to learn node representations we \emph{never} have to explicitly construct $\tilde{\matS}$ either.

\vspace{0.1cm}
\noindent \textbf{Step 2b: From Similarity to Representation.}  
Recall that our ultimate interest is not in the similarity matrix $\matS$ or even an approximation such as $\tilde{\matS}$, but in the node embeddings that we can obtain from a factorization of the latter.
We now show that we can actually obtain these from the decomposition in Eq.~\eqref{eq:approx}:

\begin{theorem}
	Given graphs $\graphG_1(\V_1,\E_1)$ and $\graphG_2(\V_2,\E_2)$ with $n\times n$ joint combined structural and attribute-based similarity matrix $\matS \approx \matY \Z^T$, its node embedding matrix $\matY$ can be approximated as 
	$$\tilde{\matY} = \matC \matU \matSigma^{1/2},$$ 
	where $\matC$ is the $n \times p$ matrix of similarities between the $n$ nodes and $p$ randomly chosen landmark nodes, and $\W^{\dagger}=\matU \matSigma \matV^\top$ is the full rank singular value decomposition of the pseudoinverse of the small $p \times p$ landmark-to-landmark similarity matrix $\matW$.
\end{theorem}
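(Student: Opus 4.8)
The plan is to bypass any explicit construction or factorization of the full similarity matrix and instead read the embedding directly off the small $p \times p$ decomposition, exploiting the algebraic form of the \nystrom approximation in Eq.~\eqref{eq:approx}. Starting from $\tilde{\matS} = \matC\W^{\dagger}\matC^\top$, I would substitute the full-rank SVD $\W^{\dagger} = \matU\matSigma\matV^\top$ and split the diagonal factor as $\matSigma = \matSigma^{1/2}\matSigma^{1/2}$, which is legitimate because the singular values are nonnegative so the square root is real. Regrouping then yields
\begin{equation*}
\tilde{\matS} = \matC\matU\matSigma^{1/2}\matSigma^{1/2}\matV^\top\matC^\top = \bigl(\matC\matU\matSigma^{1/2}\bigr)\bigl(\matC\matV\matSigma^{1/2}\bigr)^\top,
\end{equation*}
so that $\tilde{\matS} = \tilde{\matY}\tilde{\Z}^\top$ with $\tilde{\matY} = \matC\matU\matSigma^{1/2}$ and $\tilde{\Z} = \matC\matV\matSigma^{1/2}$. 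Matching this against the target factorization $\matS \approx \matY\Z^\top$ and recalling that only the left factor is needed as the embedding, I would identify $\matY \approx \tilde{\matY} = \matC\matU\matSigma^{1/2}$, which is exactly the claim.

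The one structural fact worth isolating is that the similarity matrix is symmetric positive semidefinite. Symmetry of $\matS$ (hence of the principal submatrix $\matW$ and of $\W^{\dagger}$) is immediate from $\text{sim}(u,v)=\text{sim}(v,u)$ in Eq.~\eqref{eq:sim}. For positive semidefiniteness I would argue that the structural term $\exp[-\gamma_s\|\vecd_u - \vecd_v\|_2^2]$ is a Gaussian (RBF) kernel and hence PSD, that the attribute term factors over coordinates into elementary PSD kernels, and that their entrywise product is PSD by the Schur product theorem; therefore $\matW$ and its pseudoinverse $\W^{\dagger}$ are symmetric PSD as well. This upgrades the SVD of $\W^{\dagger}$ to an eigendecomposition with $\matU = \matV$, so that $\tilde{\matY} = \tilde{\Z}$ and the factorization collapses to the symmetric $\tilde{\matS} = \tilde{\matY}\tilde{\matY}^\top$, the natural form for a symmetric similarity, confirming that the single matrix $\tilde{\matY}$ suffices.

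I expect the genuine content to lie not in the algebra, which is a one-line regrouping, but in the two observations that make it meaningful: first, that the embedding can be obtained from the SVD of the tiny $p \times p$ matrix $\W^{\dagger}$ rather than from any $n \times n$ object, so that $\tilde{\matS}$ is never materialized; and second, the positive-semidefiniteness of the kernel, which is the only step requiring a real argument and which guarantees both that $\matSigma^{1/2}$ is real and that the left and right factors coincide. Everything else --- associativity of the matrix products and the nonnegativity of singular values --- is routine.
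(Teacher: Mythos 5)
Your proof is correct and follows essentially the same route as the paper's: substitute the full-rank SVD $\W^{\dagger}=\matU\matSigma\matV^\top$ into $\tilde{\matS}=\matC\W^{\dagger}\matC^\top$, split $\matSigma=\matSigma^{1/2}\matSigma^{1/2}$, and regroup to read off $\tilde{\matY}=\matC\matU\matSigma^{1/2}$, which is exactly the paper's one-line argument. Your added positive-semidefiniteness discussion (Gaussian kernel plus Schur product, giving $\matU=\matV$ and a symmetric factorization) is a nice observation but is not required for the theorem as stated, since the nonnegativity of singular values already makes $\matSigma^{1/2}$ real and the claim only concerns the left factor.
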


\begin{proof}
	Given the full-rank SVD of the $p \times p$ matrix $\W^\dagger$ as $\matU \matSigma \matV^\top,$ we can rewrite Eq.~\eqref{eq:approx} as $\matS \approx \tilde{\matS} = \matC (\matU \matSigma \matV^\top) \matC^\top = (\matC \matU \matSigma^{1/2}) \cdot (\matSigma^{1/2} \matV^\top \matC^\top) = \tilde{\Y}\tilde{\Z}^\top$. 
\end{proof}

Now, we \emph{never} have to construct an $n \times n$ matrix and then factorize it (i.e., by optimizing a nonconvex factorization objective).  
Instead, to derive $\tilde{\matY}$, the only node comparisons we need are for the $n \times p$ ``skinny'' matrix $\matC$, while the expensive SVD is performed only on its small submatrix $\W$.  Thus, we can obtain node representations by \emph{implicitly} factorizing $\tilde{\matS}$, a low-rank approximation of the full similarity matrix $\matS$. 
The $p$-dimensional node embeddings of the two input graphs $\graphG_1$ and $\graphG_2$ are then subsets of $\tilde{\matY}$: $\tilde{\matY}_1$ and $\tilde{\matY}_2$, respectively.  This construction corresponds to the explicit factorization (Fig.~\ref{nystrom-pipeline}), but at significant runtime and storage savings. 

\begin{figure}[t!]
    \centering
    \includegraphics[width=\columnwidth]{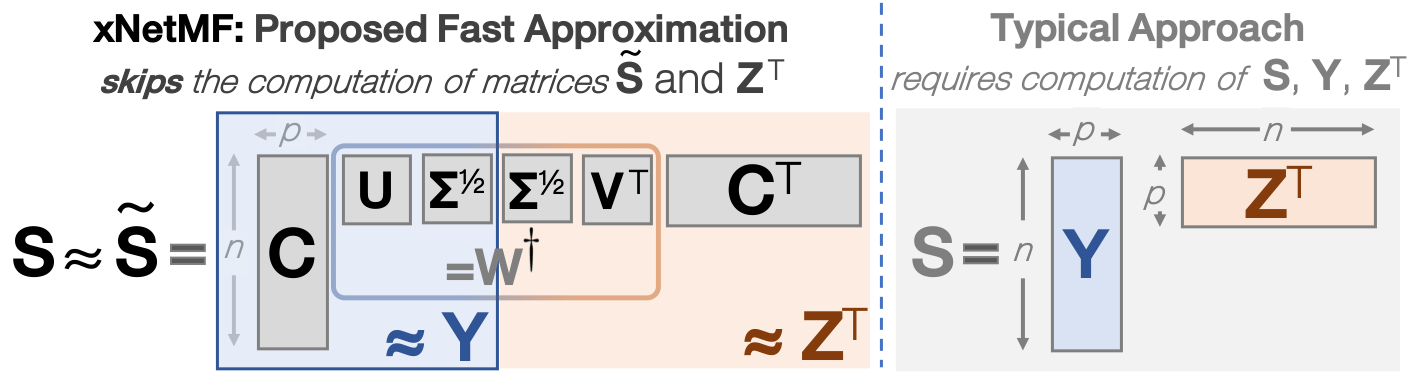}
    \vspace{-0.7cm}
    \caption{Proposed \embedding (using the SVD of $\matW^\dagger$) vs.\ typical matrix factorization for computing the node embeddings $\matY$. Our \embedding method leads to significant savings in space and runtime.}
    \label{nystrom-pipeline}
    \vspace{-.1cm}
\end{figure}

 As stated earlier, \embedding, which we summarize in Alg.~\ref{xnetmf}, forms the first two steps of \method.  The postprocessing step, where we normalize the magnitude of the embeddings, makes them more comparable based on Euclidean distance, which we use in \method.

\vspace{0.1cm}
\noindent \textbf{Connection between \embedding and SGNS}. 
We show a formal connection between matrix factorization, the technique behind our \embedding, and a variant of the struc2vec framework: another form of structure-based embedding optimized with SGNS~\cite{struc2vec} in Appendix \ref{app:xnetmfs2v}. 
Indeed, similar equivalences between SGNS and matrix factorization have been studied~\cite{levy2014neural,emf-embed} and applied to proximity-based node embedding methods \cite{netmf}, but
ours is the first to explore such connections for methods that preserve \textit{structural} identity. 

{\footnotesize
\begin{algorithm}[t!]
\caption{\method($\graphG_1, \graphG_2$, $p, K, \gamma_s, \gamma_a, \alpha$)} 
\label{regal}
\begin{algorithmic}[1]
\State \textbf{====== STEPS 1 and 2. Structural Node Representation Learning ====== }
\State $[\tilde{\matY}_1, \tilde{\matY}_2]$ = \embedding($\graphG_1, \graphG_2, p, K, \gamma_s, \gamma_a$) \Comment{Learn $n_1 \times p$ and $n_2 \times p$ embeddings} 
\vspace{0.1cm}
\State \textbf{=========== STEP 3. Fast Node Representation Alignment =========== }
\State $\matM$ = \textbf{empty} \Comment{\textbf{sparse} $n_1 \times n_2$ matrix $\matM$ of possible alignments}
\State $T$ = KDTree($\tilde{\matY}_2$) \Comment{Build a $k$-d tree on the node embeddings of $G_2$}
\vspace{0.1cm}
\State {/* \textbf{Match embeddings} to infer alignments */}
\For{$i = 1 \to n_1$} 
    \State {/* For embedding $i$ in $\graphG_1$, get the $\alpha$ most similar embed.\ in $\graphG_2$ and  distances*/}
    \State [TOP-$\alpha$, TOP-dist] = QueryKDTree(T, $\tilde{\matY}_1[i]$, $\alpha$) \Comment{$\tilde{\matY}_1[i]$: $i^{th}$ embedding }
  
    \For{$j$ in TOP-$\alpha$} 
        \State $m_{ij} = e^{-\text{TOP-dist}[j]}$ \Comment{Populating alignment matrix $\matM$ with embed.} 
        
    \EndFor \Comment{similarities: $e^{\text{TOP-dist}[j]}=e^{-||\,\tilde{\matY}_1[i]\,-\, \tilde{\matY}_2[j]\,||_2^2}$}
\EndFor

\State \Return{$\matM$} \Comment{alignments are largest entries in each row or column (Fig.~\ref{pipeline})}
\end{algorithmic}
\end{algorithm}
}

\setlength{\textfloatsep}{1\baselineskip plus 0.2\baselineskip minus 0.5\baselineskip}

\setlength{\textfloatsep}{0.3cm}
{\footnotesize
\begin{algorithm}[t!]
\caption{\embedding($\graphG_1, \graphG_2, p, K, \gamma_s, \gamma_a$)}
\label{xnetmf}
\begin{algorithmic}[1]
\State \textbf{================ STEP 1. Node Identity Extraction ================ }
\For{node $u$ in $\V_1 \cup \V_2$}  
    \For{hop $k$ up to $K$} \Comment{counts of node degrees of $k$-hop neighbors of $u$}
        \State $\vecdk$ = CountDegreeDistributions($\R$) \Comment{$1 \le K \le \text{graph diameter}$}
    \EndFor
    \State $\vecd_u =\sum_{k=1}^{K} \delta^{k-1} \vecdk$ \Comment{discount factor $\delta \in (0,1]$}
\EndFor
\vspace{0.1cm}
\State \textbf{========= STEP 2. Efficient Similarity-based Representation ========= }
\State \textbf{========== STEP 2a. Reduced $n \times p$ Similarity Computation ========== }
\State $\Ell = $ ChooseLandmarks($\graphG_1, \graphG_2$,$p$) \Comment{choose $p$ nodes from  $\graphG_1$, $\graphG_2$}
\For{node $u$ in $\V$}
    \For{node $v$ in $\Ell$} 
        \State $c_{uv} = e^{-\gamma_s \cdot ||\vecd_u - \vecd_v||_2^2 \,-\, \gamma_a \cdot \text{dist}(\vecf_u,\vecf_v)  }$
    \EndFor
\EndFor \Comment{Used in low-rank approx. of similarity graph (not constructed)}
\vspace{0.1cm}
\State \textbf{=========== STEP 2b. From Similarity to Representation ============ }
\State $\W = \matC[\Ell, \Ell]$ \Comment{Rows of $\matC$ corresponding to landmark nodes}
\State {$[\matU, \matSigma, \matV]$ = SVD($\W^\dagger$)} 

\State $\tilde{\matY} = \matC \matU \matSigma^{-\frac{1}{2}}$ \Comment{\textbf{Embedding}: \emph{implicit} factorization of similarity graph}

\State $\tilde{\matY} = Normalize(\tilde{\matY})$ \Comment{\textbf{Postprocessing}: make embeddings have magnitude 1} 
\State $\tilde{\matY}_1, \tilde{\matY}_2$ = Split($\tilde{\matY}$) \Comment{\textbf{Separate} representations for nodes in $\graphG_1$, $\graphG_2$}

\State \Return{$\tilde{\matY}_1, \tilde{\matY}_2$}
\end{algorithmic}
\end{algorithm}
}

\subsection{Step 3: Fast Node Representation Alignment} 
\label{sec:align} 

The final step of \method is to efficiently align nodes using their representations, assuming that two nodes  $u \in \V_1$ and $v \in \V_2$ may match if their \embedding embeddings are similar. 
Let $\tilde{\matY}_1$ and $\tilde{\matY}_2$ be matrices of the $p$-dimensional embeddings for nodes in graphs $\graphG_1$ and $\graphG_2$. 
We take the likeliness of (soft) alignment to be proportional to the similarity between the nodes' embeddings.  Thus, we greedily align nodes to their closest match in the other graph based on embedding similarity, as shown in Fig.~\ref{embed-pipeline}.  
This method is simpler and faster than optimization-based approaches, and works thanks to high-quality node feature representations.  

\vspace{0.1cm}
\noindent \textbf{Data structures for efficient alignment.}
A natural way to find the alignments for each node is to compute all pairs of similarities between node embeddings (i.e., the rows of $\tilde{\matY}_1$ and $\tilde{\matY}_2$) and choose the top-1 for each node.
Of course, this is not desirable due to its inefficiency.
Since in practice only the top-$\alpha$ most likely alignments are used, we turn to specialized data structures for quickly finding the closest data points.   
We store the embeddings $\tilde{\matY}_2$ in a $k$-d tree, a data structure used to accelerate exact similarity search for nearest neighbor algorithms and many other applications~\cite{kdtree}.  

For each node in $\graphG_1$, we can quickly query this tree with its embedding to find the $\alpha << n$ closest embeddings from nodes in $\graphG_2$.  
This allows us to compute ``soft'' alignments for each node  by returning one or more nodes in the opposite graph with the most similar embeddings, unlike many existing alignment methods that only find ``hard'' alignments~\cite{netalign, final, isorank, klau}.
Here, we define the similarity between the $p$-dimensional embeddings of nodes $u$ and $v$ as $sim_{emb}(\tilde{\matY}_1[u], \tilde{\matY}_2[v]) = e^{-||\,\tilde{\matY}_1[u]\, - \,\tilde{\matY}_2[v]\,||_2^2}$, which converts the Euclidean distance to similarity. Since we only want to align nodes to counterparts in the other graph,  we only compare embeddings in $\tilde{\matY}_1$ with ones in $\tilde{\matY}_2$.   If multiple top alignments are desired, they may be returned in \textbf{sorted} order by their embedding similarity; we use sparse matrix notation in the pseudocode just for simplicity.

\subsection{Complexity Analysis}

Here we analyze the computational complexity of each step of \method. 
To simplify notation, we assume both graphs have $n_1=n_2=n'$ nodes. 
    
\begin{enumerate}
    \item \textbf{Extracting node identity}: It takes approximately $O(n'K d_{avg}^2)$ time, finding neighborhoods up to hop distance $K$ by joining the neighborhoods of neighbors at the previous hop: formally, we can construct $\R = \bigcup_{v \in \mathcal{R}_u^{k-1} } \mathcal{R}_v^1 - \bigcup_{i=1}^{k-1} \mathcal{R}_u^{i} $.  We could also use breadth-first search from each node to compute the $k$-hop neighborhoods in $O(n'^3)$ worst case time---in practice significantly lower for sparse graphs and/or small $K$---but we find that this construction is faster in practice.   
    
    \item \textbf{Computing similarities}: We compute the similarities of the length-$b$ features (weighted counts of node degrees in the k-hop neighborhoods, split into $b$ buckets) between each node and $p$ landmark nodes: this takes $O(n'pb)$ time.
    
    \item \textbf{Obtaining representations}: We first compute the pseudoinverse and SVD of the $p \times p$ matrix $\matW$ in time $O(p^3)$, and then left multiply it by $\matC$ in time $O(n'p^2)$.  Since $p << n'$, the total time complexity for this step is $O(n'p^2)$.
    
    \item \textbf{Aligning embeddings}: We construct a $k$-d tree and use it to find the top alignment(s) in $\graphG_2$ for each of the $n'$ nodes in $\graphG_1$ in average-case time complexity $O(n' \log n')$.  
\end{enumerate}
    
The total complexity is $O(n' \max\{pb, p^2, Kd_{avg}^2, \log n'\})$.  As we show experimentally, it suffices to choose small $K$ as well as $p$ and $b$ logarithmic in $n'$.
With $d_{avg}$ often being small in practice,   
this can yield \emph{sub-quadratic} time complexity.  
It is straightforward to show that the space requirements are sub-quadratic as well.

\section{Experiments}
\label{sec:experiments}
We answer three important questions about our methods: \\
\textbf{(Q1)} How does \method compare to baseline methods for network alignment on noisy real world datasets (Table \ref{tab:datasets}), with and without attribute information, in terms of accuracy and runtime? \\
\textbf{(Q2)}~How scalable is \method? \\ 
\textbf{(Q3)} How sensitive are \method and \embedding to hyperparameters?

\vspace{0.2cm}
\noindent \textbf{Experimental Setup.} 
Following the network alignment literature~\cite{bigalign,final}, for each real network dataset with \emph{adjacency matrix} $\A$, we generate a new network with adjacency matrix $\A' = \perm\A\perm^\top$, where 
$\perm$ is a randomly generated permutation matrix with the nonzero entries representing ground-truth alignments. 
We add structural noise to $\A'$ by removing edges with probability $p_s$ without disconnecting any nodes.  

For experiments with attributes, we generate synthetic attributes for each node if the graph does not have any.
We add noise to these by flipping binary values or choosing categorical attribute values uniformly at random from the remaining possible values with probability $p_a$.  
For each dataset and noise level, noise is randomly and independently added.   

All experiments are performed on an Intel(R) Xeon(R) CPU E5-1650 at 3.50GHz with 256GB RAM, with hyperparameters  $\delta = 0.01$, $K = 2$, $\gamma_{s} = \gamma_{a} = 1$, and $p = \lfloor 10 \log_2 n \rfloor$ unless otherwise stated. 
Landmarks for \method are chosen arbitrarily from among the nodes in our graphs, in keeping with the effectiveness and popularity of sampling uniformly at random \cite{nystrom}.  
In Sec.~\ref{sec:parameters}, we explore the parameter choices
and find that these settings yield stable results at reasonable computational cost. 

\begin{table*}[htb]
  \centering
  \begin{minipage}[c]{0.63\textwidth}
  \includegraphics[width=\textwidth]{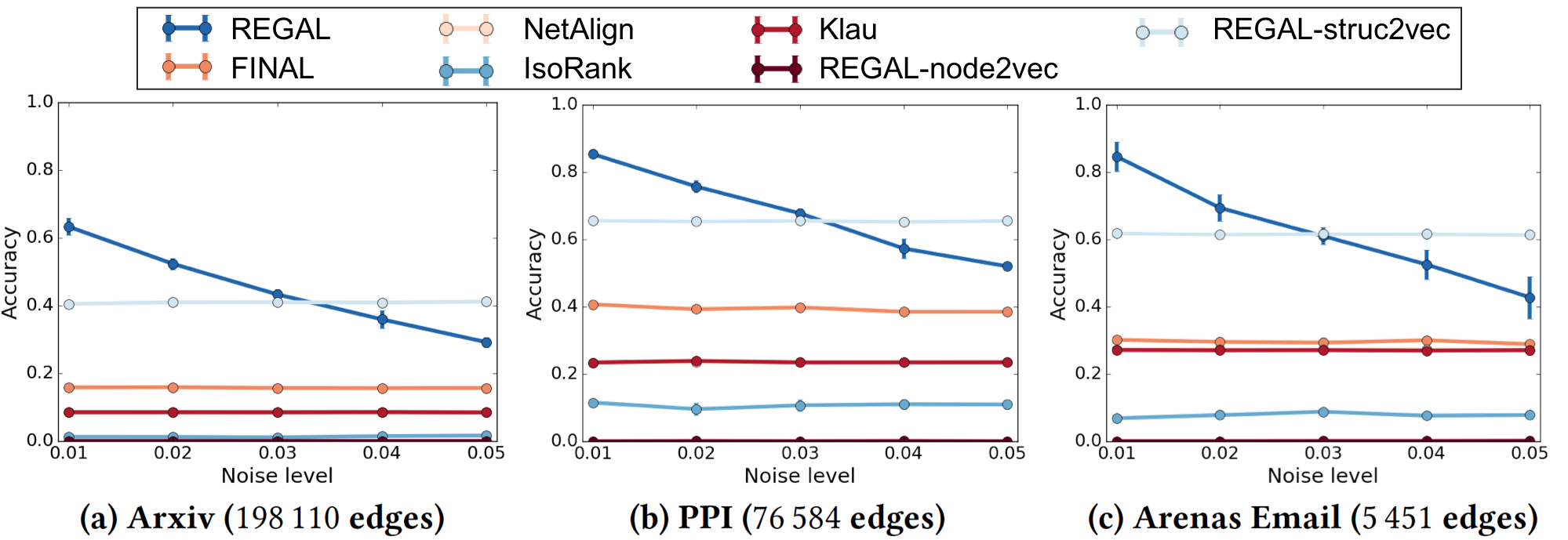}
  \vspace{-0.6cm}
   \captionof{figure}{Accuracy of network alignment methods with varying $p_s$. \method (in dark blue) achieves consistently high accuracy \emph{and} runs faster than its closest competitors (Table~\ref{structural-comp-runtime}).
    }
   \label{structural-comp}
  \end{minipage}
  \hfill
  \begin{minipage}[c]{.34\textwidth}
  \caption{Average (stdev) runtime in sec of alignment methods from 5 trials. The two fastest methods per dataset are in bold. \method is faster than its closest competitors in accuracy (Fig.~\ref{structural-comp}).}
  \label{structural-comp-runtime}
  \resizebox{\columnwidth}{!}{
  \begin{tabular}[b]{l rrr }
    \toprule
   \textbf{Dataset}  &  \textbf{Arxiv} &  \textbf{PPI} & \textbf{Arenas} \\
     \midrule
     \textbf{FINAL}  & 4182 (180) & 62.88 (32.20)  & 3.82 (1.41)  \\
     \textbf{NetAlign}  & 149.62 (282.03) & 22.44 (0.61) & \bf 1.89 (0.07) \\ 
     \textbf{IsoRank}  & {\bf 17.04 (6.22)} & {\bf 6.14 (1.33)} & {\bf 0.73 (0.05)} \\
     \textbf{Klau}   & 1291.00 (373) & 476.54 (8.98) & 43.04 (0.80) \\ \hline
     \textbf{REGAL-node2vec}  & 709.04 (20.98) & 139.56 (1.54)  & 15.05 (0.23)   \\
     \textbf{REGAL-struc2vec}  & 1975.37 (223.22) & 441.35 (13.21) & 74.07 (0.95)   \\
     \textbf{REGAL}  & {\bf 86.80 (11.23)} & {\bf 18.27 (2.12)} & {2.32 (0.31)}   \\
    \bottomrule
    \end{tabular}
    }
  \end{minipage}
\end{table*}

\vspace{0.15cm}
\noindent \textbf{Baselines.} We compare against six baselines.
Four are well known existing network alignment methods and two are variants of our proposed framework that match embeddings produced by existing node embedding methods (i.e.,  not \embedding).
The \textbf{four existing network alignment methods} are: \textbf{(1) FINAL}, which introduces a family of algorithms optimizing quadratic objective functions~\cite{final}; \textbf{(2) NetAlign}, which formulates alignment as an integer quadratic programming problem and solves it with message passing algorithms~\cite{netalign}; \textbf{(3) IsoRank}, which solves a version of the integer quadratic program with relaxed constraints~\cite{isorank}; and \textbf{(4) Klau's} algorithm (Klau), which imposes a linear programming relaxation, decomposes the symmetric constraints and solves it iteratively~\cite{klau}. 
These methods all require as input a matrix containing prior alignment information, which we construct from degree similarity, taking the top $\lfloor \log_2 n \rfloor$ entries for each node; \method, by contrast, does not require prior alignment information.

For the \textbf{two variants} of our framework, which we refer to as \textbf{(5) \method-node2vec} and (6) \textbf{\method-struc2vec}, we replace our own \embedding embedding step (i.e., Steps 1 and 2 in \method) with existing node representation learning methods  node2vec~\cite{node2vec} or  struc2vec~\cite{struc2vec}: 
two recent, state-of-the-art node embedding methods that make a claim about being able to capture some form of structural equivalence.  To apply these embedding methods, which were formulated for a single network,  we create a single  input graph $\graphG$ by combining the graphs with respective adjacency matrices $\matA$ and $\matA'$ into one block-diagonal adjacency matrix $[\matA \; \mathbf{0}; \mathbf{0} \; \matA']$. 
Beyond the input, we use their default parameters: 10 random walks of length 80 for each node to sample context with a window size of 10. For node2vec, we set $p = q = 1$ (other values make little difference).  For struc2vec, we use the recommended optimizations~\cite{struc2vec} to compress the degree sequences and reduce the number of node comparisons, which were found to speed up computation with little effect on performance \cite{struc2vec}. As we do for our \embedding method, we consider a maximum hop distance of $K = 2$.  

\vspace{0.15cm}
\noindent \textbf{Metrics.} We compare \method to baselines with two metrics: 
\textbf{alignment accuracy}, which we take as
(\# correct alignments) / (total \# alignments), and \textbf{runtime}.
When computing results, we average over 5 independent trials on each dataset at each setting (with different random permutations and noise additions) and report the mean result and the standard deviation (as bars around each point in our plots.)  
We also show where \method's soft alignments contain the ``correct'' similarities within its top $\alpha << n$ choices using the more \textbf{general top-$\alpha$ accuracy}: (\# correct alignments in top-$\alpha$ choices) / (total \# alignments).  
This metric does not apply to the existing network alignment baselines that do not directly match node embeddings and only find hard alignments.

\begin{table}[t!]
\vspace{-0.1cm}
\centering
\caption{Real data used in our experiments.}
\label{tab:datasets}
\vspace{-0.3cm}
{\footnotesize
\begin{tabular}{l r@{\hspace{1.2em}}  r@{\hspace{4pt}} @{\hspace{5pt}}r l}
\toprule
   \textbf{Name} & \textbf{Nodes} & \textbf{Edges} &  \textbf{Description}  \\
\midrule
     Facebook \cite{fblarge} & 63\,731 & 817\,090 & social network \\ 
     
     Arxiv \cite{snapnets} & 18\,722 & 198\,110 & collaboration network \\ 

     DBLP \cite{prado2013mining} & 9\,143 & 16\,338 & collaboration network \\

     PPI \cite{ppi} & 3\,890     & 76\,584    &  protein-protein interaction \\

     Arenas Email \cite{koblenz} &  1\,133      & 5\,451    & communication network \\
\bottomrule
\end{tabular}
}
\vspace{-0.1cm}
\end{table}

  \begin{figure*}[t!]
\vspace{-0.6cm}
	\centering
    \subfloat[1 synthetic binary attribute\label{1synth}]{%
      \includegraphics[width=0.2\textwidth]{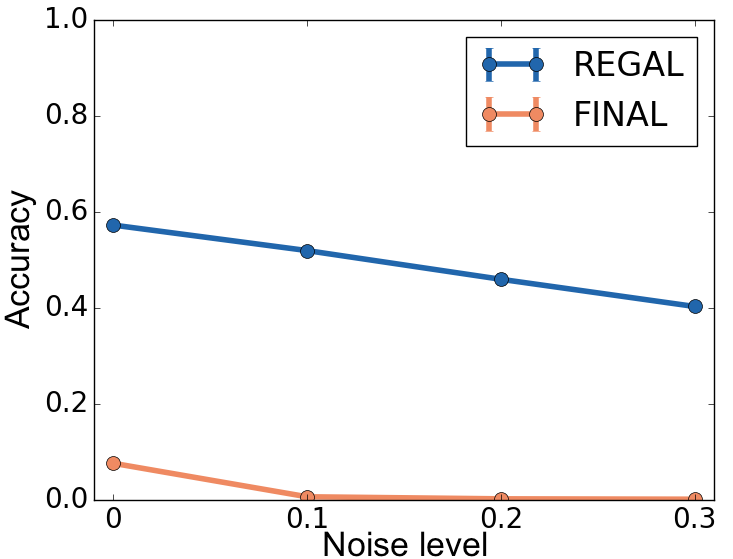}
    }
    \subfloat[3 synthetic binary attributes\label{3synth}]{%
      \includegraphics[width=0.2\textwidth]{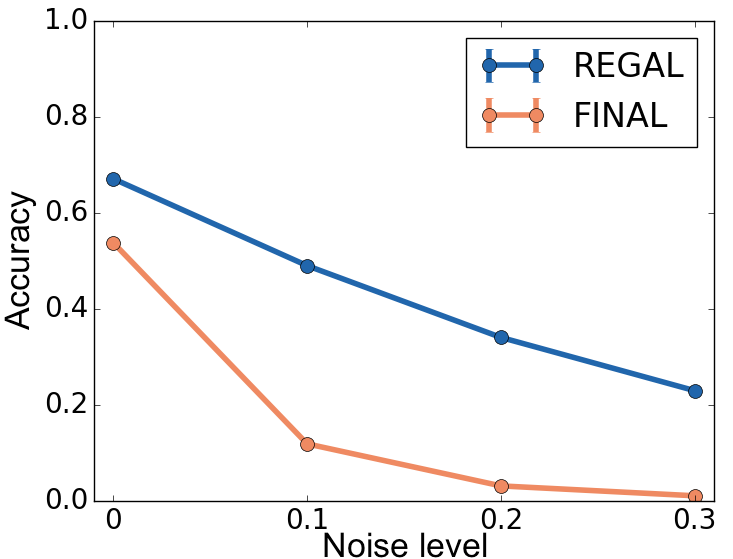}
    }
    \subfloat[5 synthetic binary attributes\label{5synth}]{%
      \includegraphics[width=0.2\textwidth]{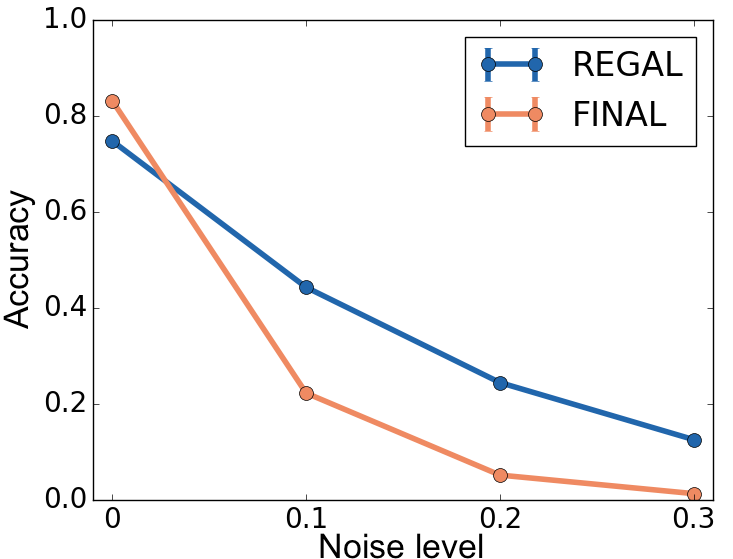}
    }
    \subfloat[Real attribute (29 values)
    \label{realattrs}]{\includegraphics[width=0.2\textwidth]{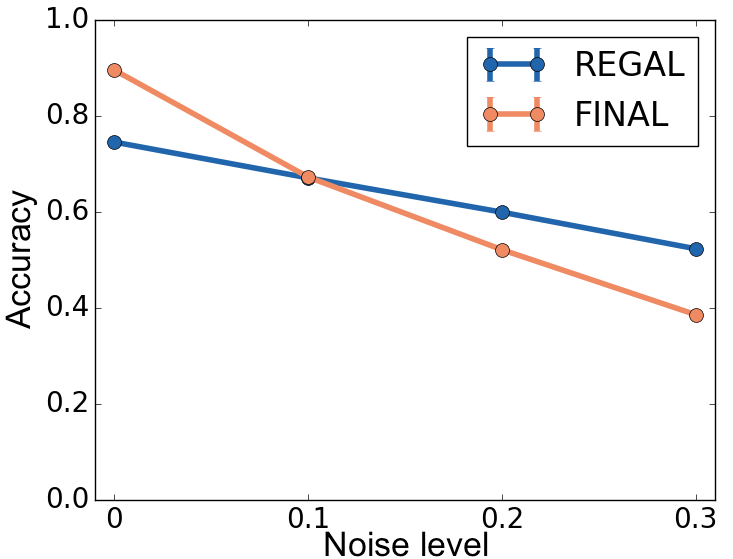}
    }
    \subfloat[Runtime with attributes
    \label{attr-run}]{\includegraphics[width=0.22\textwidth]{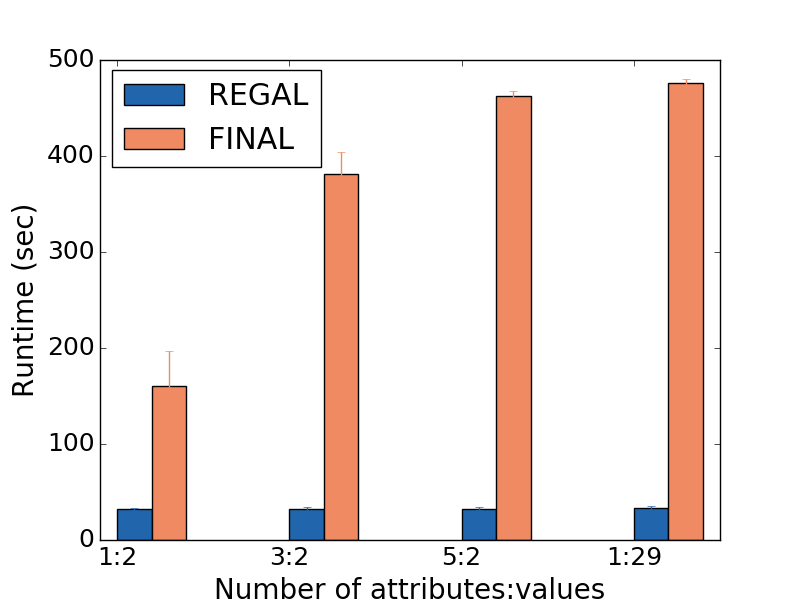}
    }
    \vspace{-0.2cm}
	\caption{DBLP Network alignment with varying $p_a$: \method is more robust to attribute noise (plots a-d) and runs faster (plot e) than FINAL for various numbers and types of attributes. In (e) the x axis consists of $<$\# of attributes: \# of values$>$ pairs corresponding to plots (a)-(d).}
    \label{attrs-comp}
    \vspace{-0.1cm}
\end{figure*}

\subsection{Q1: Comparative Alignment Performance}  
\label{sec:exp_q1}

To assess the comparative performance of \method versus existing network alignment methods on a variety of challenging datasets, we perform two experiments studying the effects of structural and attribute noise, respectively.

\subsubsection{Effects of structural noise.} 
In this experiment we study how well \method matches nodes based on structural identity alone. 
This also allows us to compare to the baseline network alignment methods NetAlign, IsoRank, and Klau, as well as the node embedding methods node2vec and struc2vec, none of which was formulated to handle or align attributed graphs (which we study in Sec.~\ref{sec:attr_noise}).
As we discuss further below, \method is one of the fastest network alignment methods, especially on large datasets, and has comparable or better accuracy than all baselines.

\vspace{.1cm}
\noindent \textbf{Results.} \textbf{(1) Accuracy.} The accuracy results on several datasets are shown in Figure~\ref{structural-comp}.  
The \emph{structural} embedding \method variants consistently perform best. 
Both \method (matching our proposed \embedding embeddings) and \method-struc2vec are significantly more accurate than all \textit{non-representation learning baselines} across noise levels and datasets.  
As expected, \method-node2vec does hardly better than random chance because rather than preserving structural similarity, it preserves similarity to nodes based on their proximity to each other, which means there is no way of identifying similarity to corresponding nodes in \emph{other}, disconnected graphs (even when we combine them into one large graph, because they form disconnected components.)  This major limitation of embedding methods that use proximity-based node similarity criteria \cite{mlg_paper} justifies the need for \emph{structural} embeddings for cross-network analysis.

Between \method and \method-struc2vec, the two highest performers, \method performs better with lower amounts of noise. 
This is likely because struc2vec's randomized context sampling introduces some variance into the representations that \embedding does not have, as nodes that should match will have different embeddings not only because of noise, but also because they had different contexts sampled.  
With higher amounts of noise (4-5\%), \method outperforms \method-struc2vec in speed,  but at the cost of some accuracy. It is also worth noting that their accuracy margin is smaller for larger graphs. 
On larger datasets, our simple and fast logarithmic binning scheme (Step 1 in Sec.~\ref{sec:step1}) provides a robust enough way of comparing nodes with high expected degrees.  However, on small graphs with a few thousand nodes and edges, it appears that struc2vec's use of dynamic time warping (DTW) better handles misalignment of degree sequences from noise because it is a nonlinear alignment scheme.  
Still, we will see that \method is significantly faster than its struc2vec variant, since DTW is computationally expensive \cite{struc2vec}, as is context sampling and SGNS training.   

\vspace{0.1cm}
\noindent \textbf{(2) Runtime.}
In Table~\ref{structural-comp-runtime}, we  compare the average runtimes of all different methods across noise levels. We observe that \method scales significantly better using \embedding than when using other node embedding methods. Notably, \method is 6-8$\times$ faster than \method-node2vec and 22-31$\times$ faster than \method-struc2vec.
This is expected as both dynamic time warping (in struc2vec) and context sampling for SGNS (in struc2vec and node2vec) come with large computational costs.
\method, at the cost of some robustness to high levels of noise, avoids both the variance and computational expense of random-walk-based sampling.  This is a significant benefit that allows \method to achieve up to an order of magnitude speedup over the other node embedding methods.  Additionally, \method is able to leverage the power of node representations and also use attributes, unlike the other representation learning methods.

Comparing to baselines that do not use representation learning, we see that \method is  competitive in terms of runtime as well as significantly more accurate.  
\method is consistently faster than FINAL and Klau, the next two best-performing methods by accuracy (NetAlign is virtually tied for third place with Klau on all datasets).  
Although NetAlign runs faster than \method on small datasets like Arenas, on larger datasets like Arxiv NetAlign's message passing becomes expensive.  
Finally, while IsoRank is consistently the fastest method, it performs among the worst on all datasets in accuracy.  
Thus, we can see that our \method framework is also one of the fastest network alignment methods as well as the most accurate.  

\subsubsection{Effects of attribute-based noise.} 
\label{sec:attr_noise} In the second experiment, we study \method's comparative sensitivity to $p_a$ when we use node attributes.  
Here we compare \method to FINAL because it is the only baseline that handles attributes. 
We also omit embedding methods othen than \embedding, since they operate on plain graphs.

We study a subnetwork of a larger DBLP collaboration network extracted in \cite{final} (Table \ref{tab:datasets}).
This dataset has 1 node attribute with 29 values, corresponding to the top conference in which each author (a node in the network) published.  
This single attribute is quite discriminatory: with so many possible attribute values, a comparatively smaller number of nodes share the same value. 
We add $p_s = 0.01$ structural noise to randomly generated permutations.   

We also  increase  attribute information by increasing the number of attributes.  
To do so, we simulate different numbers of binary attributes.   We study somewhat higher levels of attribute noise, as they are not strictly required for network alignment.   

\vspace{.1cm}
\noindent \textbf{Results.}  In Figure \ref{attrs-comp}, we see that \method mostly outperforms FINAL in the presence of attribute noise (both for real and multiple synthetic attributes), \emph{or} in the case of limited attribute information (e.g., only 1-3 binary attributes in Fig.~\ref{1synth}-\ref{5synth}). This is because FINAL relies heavily on attributes, whereas \method uses structural and attribute information in a more balanced fashion.

While FINAL achieves slightly higher accuracy than \method with abundant attribute information from many attributes or attribute values and minimal noise (e.g. the real attribute with 29 values in Figure \ref{realattrs}, or 5 binary attributes in Figure \ref{5synth}), this is expected due to FINAL's reliance on attributes. 
Also, in Figure~\ref{attr-run} where we plot the runtime with respect to number of $<$attributes : attribute values$>$, we see FINAL incurs significant runtime increases as it uses extra attribute information. 
Even without these added attributes, \method is up to two orders of magnitude faster than FINAL.

\begin{figure*}[t!]
	\centering
    \subfloat[Discount factor $\delta$
    \label{delta}]{\includegraphics[width=0.2\textwidth]{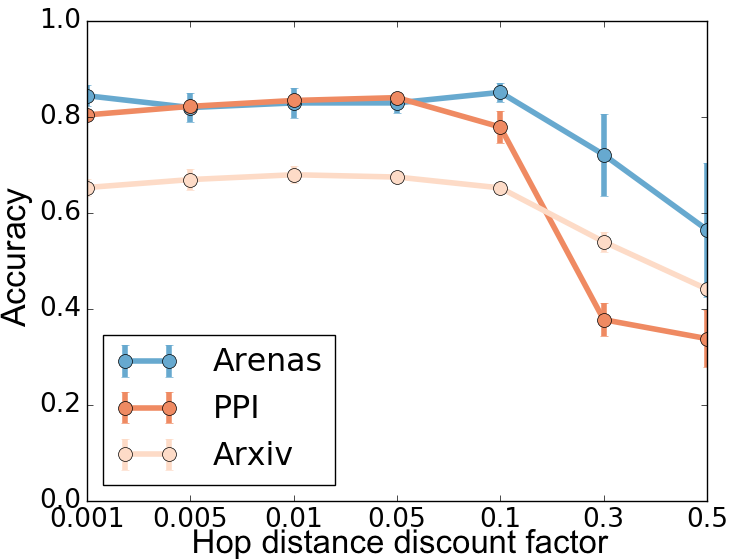}
    }
    \subfloat[Maximum hop distance $K$ \label{maxlayer}]{
      \includegraphics[width=0.2\textwidth]{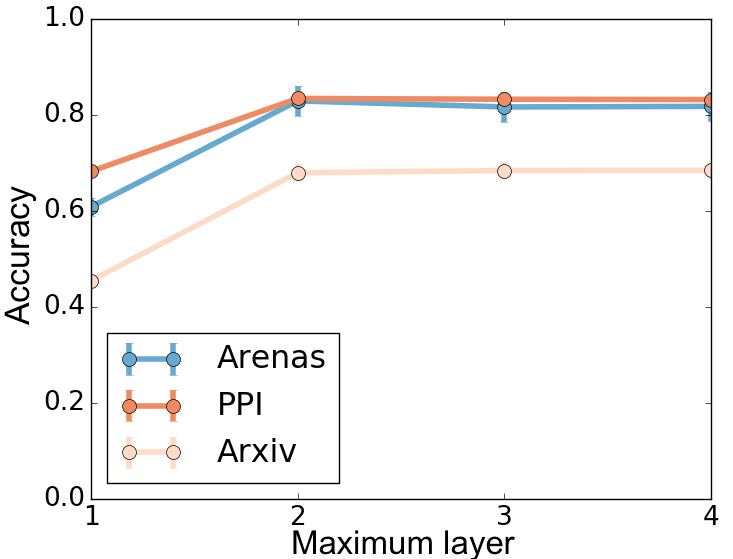}
    }
    \subfloat[Coeff. $\gamma_s$ (structural sim.)\label{gammastruc}]{
      \includegraphics[width=0.2\textwidth]{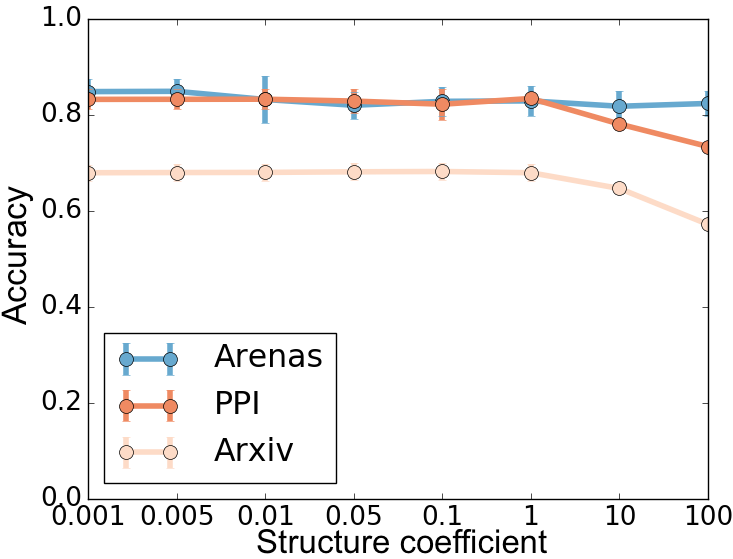}
    }
    \subfloat[Coeff. $\gamma_a$ (attribute sim.)\label{gammaattr}]{
      \includegraphics[width=0.2\textwidth]{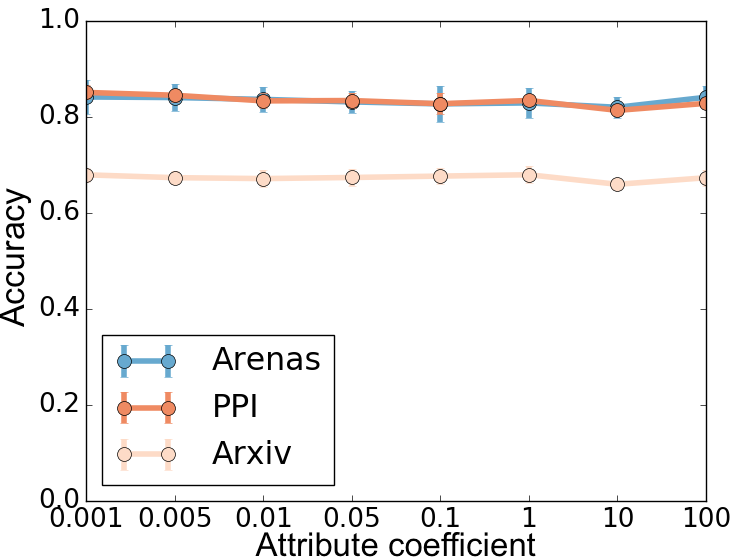}
    }
    \subfloat[top-$\alpha$ scores on Facebook~\cite{fblarge} \label{topa}]{%
      \includegraphics[width=0.2\textwidth]{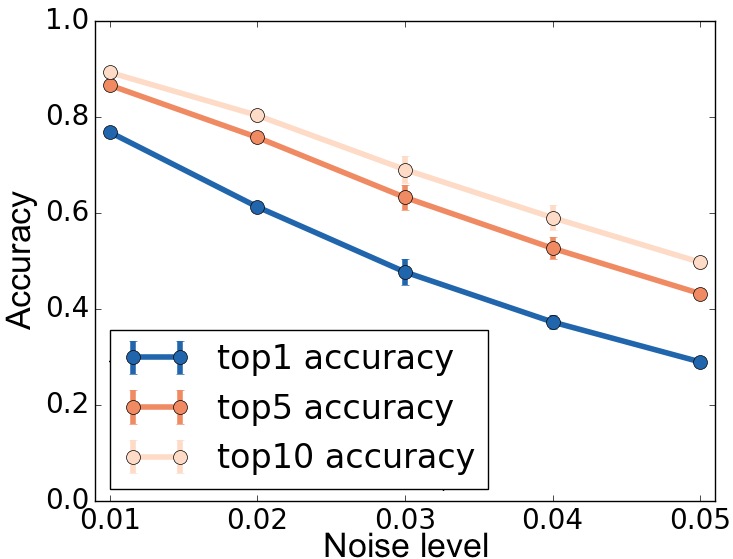}
    }
    \vspace{-0.35cm}
	\caption{Robustness of \method to hyperparameters on different datasets: \method is generally robust for a range of values, without fine tuning.}
    \label{params}
    \vspace{-0.2cm}
\end{figure*}

\subsection{Q2: Scalability}
\label{sec:scalability}

To analyze the scalability of REGAL, we generate Erd\"{o}s-R\'{e}nyi graphs with $n = 100$ to 1,000,000 nodes and constant average degree 10, along with one binary attribute.  We generate a randomized, noisy permutation ($p_s = 0.01, p_a = 0.05$) and look for the top $\alpha = 1$ alignments.  Thus, we embed \emph{both} graphs--double the number of nodes in a single graph.  Figure \ref{scalability} shows the runtimes for the major steps of our methods.

\begin{wrapfigure}{r}{0.22\textwidth}
  \vspace{-0.95cm}
    \centering
    \includegraphics[width=0.22\textwidth]{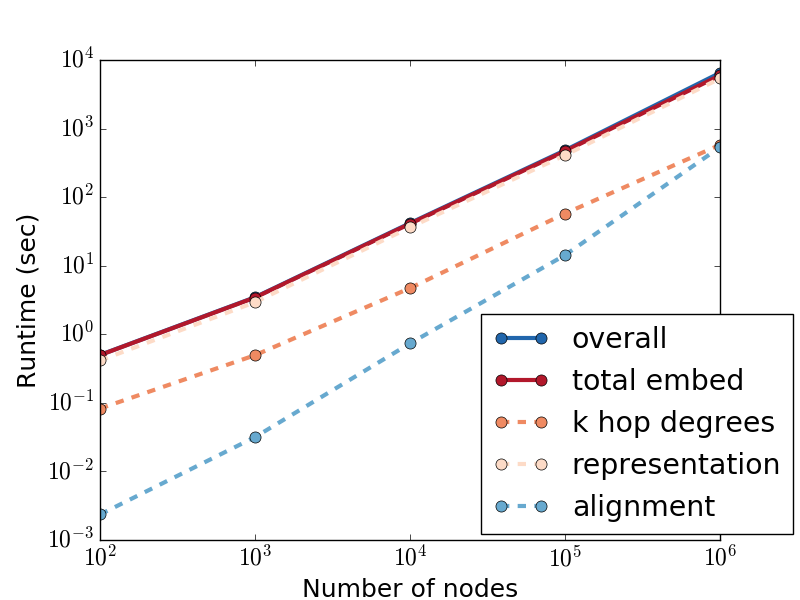}
    \vspace{-0.7cm}
    \caption{\method is subquadratic.}
    \label{scalability}
    \vspace{-0.4cm}
    \end{wrapfigure}
    
\textbf{Results.}  We see that the total runtimes of \method's steps 
are clearly \emph{sub-quadratic}, which is rare for alignment tasks.  
In practice this means that \method can scale to very large networks.  
The dominant step is computing $O(n \log n)$ similarities to landmarks in $\matC$ and using this to form the Nystr\"{o}m-based representation.
The alignment time complexity grows the most steeply, as the dimensionality $p$ grows with the network size and increasingly affects lookup times.  
In practice, though, the alignment adds little overhead time, even for the largest graph, because of the $k$-d tree.  
\emph{Without} it, \method runs out of memory on 100K or more nodes.   

From a practical perspective, while our current implementation is single-threaded, many steps---including the expensive embedding construction and alignment steps---are easily and trivially parallelizable, offering possibilities for even greater speedups. 

\subsection{Q3: Sensitivity Analysis}
\label{sec:parameters}
To understand how \method's hyperparameters affect performance, we analyze accuracy by varying hyperparameters in several experiments.
For brevity, we report results at $p_s = 0.01$ and with a single binary noiseless attribute, although further experiments with different settings yielded similar results.  
Overall we find that \method is \textbf{robust} to different settings and datasets, indicating that \method can be applied readily to different graphs without requiring excessive domain knowledge or fine-tuning.

\vspace{.1cm}
\noindent \textbf{Results.} \textbf{(1) Discount factor $\delta$ and max hop distance $K$.} Figures \ref{delta} and \ref{maxlayer} respectively show the performance of \method as a function of $\delta$, the discount factor on further hop distances, and $K$, the maximum hop distance to consider.  We find that some higher-order structural information does help (thus $K = 2$ performs slightly better than $K = 1$), but only up to a point.
Beyond approximately 2 layers out, the structural similarity is so tenuous that it primarily adds noise to the neighborhood degree distribution (furthermore, computing further hop distances adds computational expense).   
Choosing $\delta$ between $0.01$--$0.1$ tends to yield best  performance. 
Larger discount factors $\delta$ tend to do poorly, though extremely small values may lose higher-order structural information.    

\begin{figure}[t!]
\vspace{-0.4cm}
	\centering
    \subfloat[Accuracy w.r.t. \# of landmarks\label{pacc}]{
      \includegraphics[width=0.25\textwidth]{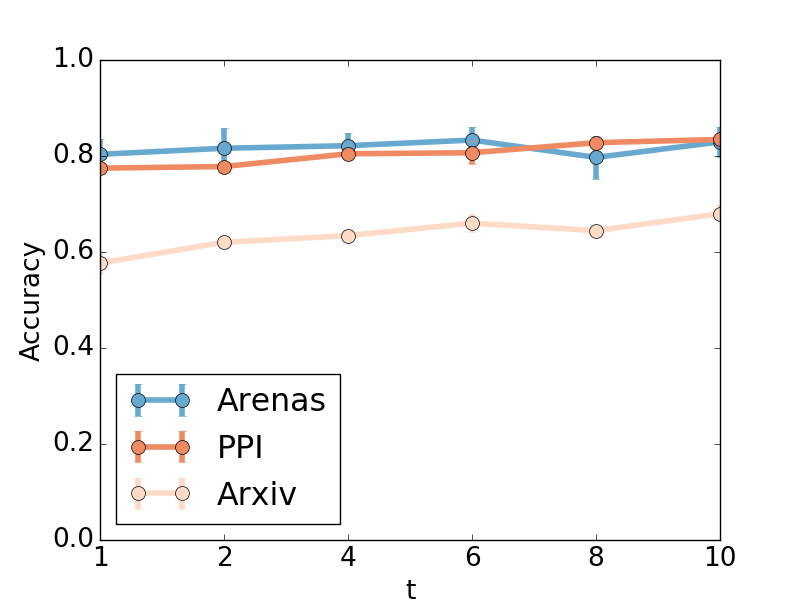}
    }
    ~
    \subfloat[Runtime w.r.t. \# of landmarks \label{prun}]{
      \includegraphics[width=0.25\textwidth]{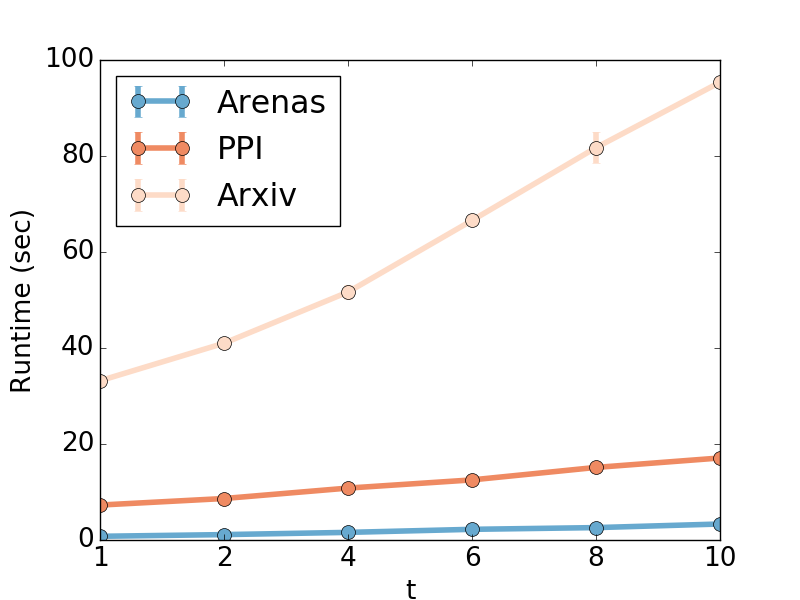}
    }
    \vspace{-0.2cm}
	\caption{Robustness of \method to $t$, which controls the number of landmarks $p = \lfloor t \log_2n \rfloor$: choosing more landmarks is more computationally expensive but can slightly increase accuracy.}
    \label{landmarks}
\end{figure}

\vspace{.05cm}
\noindent  \textbf{(2) Weights of structural $\gamma_{s}$ and attributed $\gamma_{a}$ similarity.} Next, we explore how to set the coefficients on the terms in the similarity function weighting structural and attribute similarity, which also governs a tradeoff between structural and attribute identity.  In Figs.~\ref{gammastruc} and \ref{gammaattr} we respectively vary $\gamma_{s}$ and $\gamma_{a}$ while setting the other to be 1. In general, setting these parameters to be 1, our recommended default value,  does fairly well.  Significantly larger values yield less stable performance.  

\vspace{.05cm}
\noindent \textbf{(3) Dimensionality of embeddings $p$.} To study the effects of the rank of the implicit low-rank approximation, which is also the dimensionality of the embeddings, we set the number of landmarks $p$ equal to $\lfloor t \log_2n \rfloor$ and vary $t$.  
Figure~\ref{pacc} shows that the accuracy is generally highest for the highest values of $t$, but Figure~\ref{prun} shows the expected increase in \method's runtime as more similarities are computed in $\matC$ and higher-dimensional embeddings are compared.  
To spare no expense in maximizing accuracy we use $t = 10$.
However, fewer landmarks still yield almost as high accuracy if computational constraints or high dimensionality are issues.

\vspace{.05cm}
\noindent \textbf{(4) Top-$\alpha$ accuracy.} It is worth studying not just the proportion of correct hard alignments, but also the top-$\alpha$ scores of the soft alignments that \method can return.  We perform alignment without attributes on a large Facebook subnetwork \cite{fblarge} and visualize the top-1, top-5, and top-10 scores in Fig.~\ref{topa}.  Across noise settings, the top-$\alpha$ scores are considerably several percentage points higher than the top-1 scores, indicating that even when \method misaligns a node, it often still recognizes the similarity of its true counterpart.   
\method's ability to find soft alignments could be valuable in many applications, like entity resolution across social networks \cite{bigalign}. 

\section{Conclusion}
\label{conclusion}
Motivated by the numerous applications of network alignment in social, natural, and other sciences, we proposed \method, a network alignment framework that leverages the power of node representation learning by aligning nodes via their learned embeddings.
To efficiently learn node embeddings that are comparable across multiple networks, we introduced \embedding within \method.
To the best of our knowledge, we are the first to propose an unsupervised representation learning-based network alignment method.

Our embedding formulation captures node similarities using structural and attribute identity, making it suitable for cross-network analysis. 
Unlike other embedding methods that sample node context with computationally expensive and variance-inducing random walks, 
our extension of the \nystrom low-rank approximation  allows us to implicitly factorize a similarity matrix \textit{without} having to fully construct it.
Furthermore, we showed that our formulation is a matrix factorization perspective on the skip-gram objective optimized over node context sampled from a similarity graph.  
Experimental results showed that \method is up to 30\% more accurate than baselines and $30\times$ faster in the representation learning stage.  Future directions include extending our techniques to weighted networks and incorporating edge signs or other attributes. 

\section*{Acknowledgements}
{\small 
This material is based upon work supported by the National Science Foundation under Grant No. IIS 1743088, an Adobe Digital Experience research faculty award, and the University of Michigan. Any opinions, findings, and conclusions or recommendations expressed in this material are those of the author(s) and do not necessarily reflect the views of the National Science Foundation or other funding parties. The U.S. Government is authorized to reproduce and distribute reprints for Government purposes notwithstanding any copyright notation here on.
}

\balance 

\bibliographystyle{ACM-Reference-Format}
\bibliography{sample}

\appendix

\section{Connections: \lowercase{x}N\lowercase{et}MF and SGNS}
\label{app:xnetmfs2v}

Here we unpack the key components of the struc2vec framework~\cite{struc2vec}, a random walk-based structural representation learning approach, and we find a matrix factorization interpretation at the heart of it. 

Given a (single-layer) similarity graph $\matS$, for each node $v$, struc2vec samples context nodes $\mathcal{C}$ with $m$ random walks of length $\ell$ starting from $v$. The probability of going from node $u$ to node $v$ is proportional to the nodes' (structural) similarity $s_{uv}$. This yields a co-occurrence matrix $\matD$: $d_{uv} = \#(u, v)$ is the number of times node $v$ was visited in context of node $u$.  Afterward, struc2vec optimizes a skip-gram objective function with negative sampling (SGNS):  

\begin{equation}
\max_{\matY,\matC} \sum_{y \in \V, c \in \mathcal{C}} \#(y,c) \log \sigma(\y^\top \context) + \ell \cdot  \mathbb{E}_{\context' \sim P_D} \log \sigma(-\y^\top \context') 
\label{sgns}
\end{equation}
where $\y$ and $\context$ are the embeddings of a node $y$, and its context node $c$, resp.;  $P_D(c) =  \sum_{y \in \V} \#(y,c) / \sum_{y \in \V, c \in C} \#(y,c)$ 
is the empirical probability that a node is sampled as some other node's context; and $\sigma(x) = (1 + e^{-x})^{-1}$ is the sigmoid function.  Analysis of SGNS for word embeddings \cite{emf-embed} showed under some assumptions on the upper bound of the co-occurrence count between two words that the objective of SGNS in Eq.~\eqref{sgns} is equivalent to matrix factorization of the co-occurrence matrix $\matD$, or MF$(\matD, \matY^\top \matC)$. Here MF is the objective of matrix factorization on $\matD$ (formally defined in \cite{emf-embed}, but in practice other matrix factorization techniques work well).  

Now, under these assumptions, we show a connection between optimizing Eq.~\eqref{sgns} with context sampled from the similarity graph (as in struc2vec), and factorizing the graph (as in \embedding).  

\begin{lemma}
\label{sgns-mf}
Equation~\eqref{sgns}, defined over a context sampled by performing $m$ length-1 random walks per node over $\matS$, is equivalent to \emph{MF}$(\matS, \matY^\top \matC)$ in the limit as $m$ goes to $\infty$, up to scaling of $\matS$.
\end{lemma}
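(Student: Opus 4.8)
The plan is to reduce the claim to the equivalence recalled immediately before the lemma—that optimizing Eq.~\eqref{sgns} under the stated bound on co-occurrence counts is equivalent to MF$(\matD, \matY^\top\matC)$, where $\matD$ is the empirical co-occurrence matrix produced by the sampling procedure—and then to show that for length-1 walks this matrix $\matD$ concentrates, as $m\to\infty$, on a scaling of $\matS$ itself. Once $\matD$ is shown to converge to (a scaling of) $\matS$, the two matrix-factorization problems MF$(\matD,\matY^\top\matC)$ and MF$(\matS,\matY^\top\matC)$ coincide in the limit, which is exactly the assertion.

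First I would analyze a single length-1 walk. Starting at a node $u$, the walk takes exactly one step, landing on $v$ with probability $p_{uv} = s_{uv} / \sum_{w \in \V} s_{uw}$ by the struc2vec transition rule (step probability proportional to similarity). Hence, over $m$ independent walks per node, the count $d_{uv}$ is a sum of $m$ i.i.d.\ Bernoulli($p_{uv}$) indicators. By the strong law of large numbers, $\frac{1}{m} d_{uv} \to p_{uv}$ almost surely as $m \to \infty$, so entrywise $\frac{1}{m}\matD \to \matP$, where $\matP$ is the row-stochastic transition matrix with entries $p_{uv}$.

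Next I would identify $\matP$ as $\matS$ up to scaling. The entry $p_{uv}$ differs from $s_{uv}$ only through the global factor $1/m$ and the per-row normalizer $\sum_w s_{uw}$, so $\matP$ is obtained from $\matS$ by rescaling. Because the matrix-factorization objective is invariant to such a scaling up to a corresponding rescaling of the learned factors $\matY$ and $\matC$, the minimizers of MF$(\matD,\matY^\top\matC)$ coincide with those of MF$(\matS,\matY^\top\matC)$ in the limit, up to this scaling—precisely the ``up to scaling of $\matS$'' qualifier in the statement.

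The main obstacle is the precise treatment of that qualifier: the transition rule normalizes each row by its similarity-row sum, so $\frac{1}{m}\matD$ converges to a \emph{row-normalized} version of $\matS$ rather than to a single scalar multiple, and symmetrization of the skip-gram context window further couples the entries $d_{uv}$ and $d_{vu}$. I would therefore need to argue either that the similarity construction yields (near-)uniform row sums, or—more cleanly—that this diagonal rescaling is absorbed into the learned factors without altering the factorization structure, so that the recovered embeddings are those of $\matS$ up to scaling. A secondary and more routine point is justifying the exchange of limit and $\arg\min$, i.e.\ that the optimal factorization of $\frac{1}{m}\matD$ converges to that of its limit $\matP$; this follows from continuity of the factorization objective in its target matrix.
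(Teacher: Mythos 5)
Your proposal follows essentially the same route as the paper's proof: apply the Law of Large Numbers to the length-1-walk co-occurrence counts so that $\tfrac{1}{m}\matD$ converges to its expectation, identify that limit with $\matS$ up to scaling, and invoke the scale-invariance of MF to conclude. The row-normalization subtlety you flag (that the transition probabilities are $s_{uv}/\sum_w s_{uw}$, so the limit is a row-rescaled $\matS$ rather than a scalar multiple) is a genuine point, but the paper's own proof does not resolve it either---it simply takes the step probability to be $s_{ij}$ and absorbs all normalization into the ``up to scaling'' qualifier---so your treatment is, if anything, slightly more careful than the original.
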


\begin{proof}
This follows from the Law of Large Numbers.  As $m \rightarrow \infty$, the co-occurrence matrix $\matD$ converges to its expectation.  This is just $m\cdot \matS$, since $d_{ij}$ is the \# of times node $v_j$ is sampled in a random walk of length 1 from $v_i$, which is equal to the \# of walks from node $v_i$ times the probability that the walk goes to $v_j$ from $v_i$, or $m \cdot s_{ij}$. (Since MF is invariant to scaling, we normalize $\matD$ w.l.o.g.)   
\end{proof}

Note that in struc2vec, increasing $m$ to sample more context reduces variance in $\matD$, but increasing $\ell$ simply causes the random walks to move further from the original node $v$ and sample context based on similarity to more structurally distant nodes.  Lemma \ref{sgns-mf} connects \embedding to a version of struc2vec with maximal $m$ and minimal $\ell$, further justifying its success by comparison.   

\end{document}